\DeclareMathOperator{\cost}{cost}
\DeclareMathOperator{\PoU}{PoU}
\theoremstyle{remark}
\newtheorem{CaseThm}{Case}[thm]
\begin{document}

\author{Yunzhe Bai \and Alec Sun}

\title{The price of uncertainty for social consensus}


\maketitle

\begin{abstract}
    How hard is it to achieve consensus in a social network under uncertainty? In this paper we model this problem as a social graph of agents where each vertex is initially colored red or blue. The goal of the agents is to achieve consensus, which is when the colors of all agents align. Agents attempt to do this locally through steps in which an agent changes their color to the color of the majority of their neighbors. In real life, agents may not know exactly how many of their neighbors are red or blue, which introduces uncertainty into this process. Modeling uncertainty as perturbations of relative magnitude $1+\varepsilon$ to these color neighbor counts, we show that even small values of $\varepsilon$ greatly hinder the ability to achieve consensus in a social network. We prove theoretically tight upper and lower bounds on the \emph{price of uncertainty}, a metric defined in previous work by Balcan et al. to quantify the effect of uncertainty in network games.
\end{abstract}

\section{Introduction}

In the past decade, the rapid expansion of social media platforms has resulted in the formation of large and highly connected social networks. The evolution of these networks has been extensively studied in order to better understand how opinions held by agents are affected by their interactions with other people. In a social setting, individuals often modify their beliefs to more closely align with others that they frequently interact with \citep{social_influence, political_discourse}. This generally causes opinions held by agents in a network to converge over time as interacting agents attempt to reach a consensus. However, previous research has shown that additional factors, such as a sparse network structure or agents who refuse to change their opinions, can hinder consensus formation \citep{10.1287/moor.1120.0570, opinion_dynamics_bounded_confidence, social_influence}. The vast majority of social network models in the existing literature assume that agents have accurate information about the opinions held by others. However, in large networks where vertices generally have many neighbors and where interactions between agents are often brief, agents are often unable to make accurate judgments about the opinions of other agents, such as the number of their consenting or dissenting peers \citep{twitter_empirical}. 

Previous work modeling uncertainty in games has focused on the failure of agents to always pick the action with the optimal payoff. For example, in the quantal response model, agents sometimes deviate from optimal behavior, with strategies providing higher payoffs being chosen more often than those providing lower payoffs \citep{penna_noisy_best_response, BLUME1993387, logit_response_dynamics}. Alternatively, agents may occasionally fail to participate in the game, which can sometimes increase the social cost \citep{agent_failures, agent_resource_failures}.

In this paper, we instead focus on the effect of uncertainty in agents' \emph{observations} rather than their actions. We model a social network as a graph with an agent at each vertex. Individuals have a choice between two colors, red or blue, and the goal of each agent is to be the same color as the majority of their neighbors. This is called a \emph{consensus game} \citep{Convergence_in_Potential_Games, Fabrikant2004TheCO}. We consider a model of uncertainty first proposed by \citet{Balcan-2009}, where agents experience small, adversarial perturbations of relative magnitude $1+\eps$ to their observations of their neighbors. These perturbations capture the difficulty of ascertaining accurate information in large social networks, as the absolute margin of error increases proportionally to the number of neighbors an agent has. Alternatively, this model can also capture the effect of biased measurements, where agents may distort observations to agree with their beliefs (confirmation bias) or place undue emphasis on new information (novelty bias) \citep{confirmation_bias, novelty_bias}. 

The goal of this paper is to study how uncertainty affects behavior in consensus games as agents make potentially suboptimal updates to their colors. \citet{Balcan-2009} define a social cost function as the number of edges whose endpoints have different colors. They quantify the effect of uncertainty using a measure called the \emph{price of uncertainty} (PoU) of consensus games, defined as the maximum multiplicative factor by which the social cost can increase from any starting state. Even extremely small adversarial perturbations can snowball out of control and cause the game to move to states of much higher social cost \citep{Balcan-2011}. 

Our main results are tight asymptotic upper and lower bounds on the PoU for consensus games. Letting $n$ denote the number of agents participating in the consensus game, the previously best known asymptotic bounds due to \citet{Balcan-2011} were a lower bound of $\Omega(n^2\eps^3)$ and an upper bound of $O(n^2\eps)$. In this paper, we extend their work by proving that the PoU for consensus games is exactly $\Theta(n^2\eps^2)$. We believe that our result also holds irrespective of agent update ordering (i.e., the adversary can only control the magnitude of the perturbations), and this is a potential future direction for research.



\section{Model}

We consider a consensus game in which each of $n$ agents is located at a distinct vertex of a simple undirected graph with $n$ vertices \citep{Convergence_in_Potential_Games}. In this paper, we will use the words \emph{agent} and \emph{vertex} interchangeably. Each agent can choose a color from $\{0,1\}$. Let us refer to two agents as \emph{neighbors} if and only if there exists an edge between them. An agent's cost is defined as the number of its neighbors that pick a different color; we call these \emph{bad} neighbors. We will also call edges whose endpoints have different colors \emph{bad} edges, and we will call neighbors or edges that are not bad \emph{good}. Finally, for any state $S$ of the game, we define a social cost function $\cost(S)$ as the sum of the costs of all agents in the game.

Starting from a ground state $S_0$ with social cost at least 1, let $S_t$ be a state achieved after $t$ steps under uncertain best-response dynamics. Here, the term \emph{uncertain} refers to the fact that an agent might miscount its cost by a multiplicative factor of at most $1+\eps$ for an uncertainty factor $0<\eps<1$ that can depend on $n$. Define an \emph{uncertain best response} as a best response that strictly increases the social cost as a result of uncertainty. In particular, the number of bad neighbors of the agent playing an uncertain best response must not exceed the number of good neighbors by more than a multiplicative factor of $1+\eps$. Also, by definition, an uncertain best response must involve a color switch. Note that we can assume $\eps = \Omega\bp{\fr{1}{n}}$, otherwise an agent cannot miscount its number of bad neighbors and hence there is no uncertainty. The \emph{price of uncertainty} (PoU) of consensus games is defined as
\[\PoU(\eps,\text{consensus})=\max\left(\frac{\cost(S_t)}{\cost(S_0)}\right),\]
where the maximum is taken over all initial states $S_0$ and all states $S_t$ reachable from $S_0$ in some number $t$ of steps \citep{Balcan-2009}.

This paper obtains a lower bound of $\Omega\bp{\eps^2n^2}$ on the PoU of consensus games for all $\eps=\til{\Omega}(n^{-1/2})$. Here $\til{\Omega}$ hides factors that are logarithmic in $n$. This is achieved using an intricate construction that amplifies the ``snowball'' effect of uncertainty described in \citet{Balcan-2011}. Next, by analyzing a clever construction, we prove that the PoU for consensus games is also bounded from above by $O\bp{\eps^2n^2}$ for all $\eps=\Omega(n^{-1/4})$. These bounds coincide, so we conclude that the asymptotically tight PoU for consensus games is $\Theta(\eps^2 n^2)$ for $\eps=\Omega(n^{-1/4})$, resolving an open problem of \citet{Balcan-2011}.

\subsection{Technical challenges} \label{challenges}

\paragraph{Lower bound.} In order to achieve a lower bound of $\Omega(\eps^3n^2)$, \citet{Balcan-2011} utilizes an \emph{initializer} gadget to set up initial colorings for an \emph{output} component. This component consists of successive layers of vertices, each of which has more vertices than the previous layer by a factor of $(1+\eps)$, and every vertex is connected to all vertices on adjacent layers. This output component is responsible for generating $\Omega(\eps n^2)$ bad edges, and the initializer gadget requires $\Theta(1/\eps^2)$ starting bad edges, so this achieves a lower bound of $\Omega(\eps^3n^2)$. We have found that it is very difficult to increase the final number of bad edges produced by such an output for any construction because the graph immediately becomes too connected for enough agents to play uncertain best responses. Rather, we present an alternative construction for an initializer gadget in \cref{proof-lower-bound} that only requires $\Theta(1/\eps)$ starting bad edges, which results in a construction for which the number of bad edges can increase by a factor of $O(\eps^2n^2)$.

\paragraph{Upper bound.} Generally, the goal of an adversary that wants to blow up the number of bad edges can be thought of as trying to maximize the number of additional bad edges that each new vertex produces in the uncertain best response process. Consider a time in which there are $m^2$ bad edges. The most ``compact'' way to store these bad edges using the least number of vertices is a bipartite graph $K_{m,m}$. If each of the $m$ vertices on one side of the bipartite graph plays an uncertain best-response, we can blow up the number of bad edges by a multiplicative factor of $(1+\eps)$. However, this requires $(1+\eps)m$ new vertices to ``absorb'' the new bad edges produced by the uncertain best-response. Hence, increasing the number of bad edges from $m^2$ to $(1+\eps)m^2$ requires roughly $(1+\eps)m$ new vertices.

Following this heuristic for a constant $\eps$ (i.e., not depending on $n$), the last stage will have a $K_{\Theta(\eps n), \Theta(\eps n)}$. Let us be optimistic in the sense that we assume that the original bad edges at each stage, which become good after vertices play an uncertain best response, can be turned bad again. Assuming most edges can eventually be turned bad again, as is the case in our lower bound construction, the total number of bad edges at the end is $$O\bp{\eps^2 n^2 + \fr{\eps^2 n^2}{1+\eps} + \cds} = O(\eps n^2).$$ Since there must be $\Omega(1/\eps)$ bad edges at the start, we estimate the upper bound on the price of uncertainty for consensus games to be $O(\eps^2 n^2)$. 

It appears natural to track the number of bad edges at every step and try to establish an upper bound on its growth rate as new vertices play uncertain best responses. The hope is that we can prove an upper bound on the growth rate in line with our estimate of $O(\eps^2n^2)$. Unfortunately, this approach fails, as illustrated by the following example.

\begin{example} \label[example]{double}
Consider a red vertex $V$ whose only neighbors are $m$ blue vertices and $m+1$ red vertices for $m = \Omega(1/\eps)$. The $m$ blue vertices are all connected to another blue vertex $B$, and the $m+1$ red vertices are all connected to another red vertex $R$. Then, we can have $V$ play $2m+1$ uncertain best responses in order to roughly triple the number of bad edges while only operating on these $O(m)$ vertices.
\begin{itemize}
    \item Let $V$ play an uncertain best response that switches its color to blue, resulting in $m+1$ bad edges.
    
    \item Take the endpoint of one such bad edge adjacent to $R$ and switch its color so that there are now $m$ bad edges incident to $V$ and 1 bad edge incident to $R$.
    
    \item Switch the color of $V$ again, resulting in $m+1$ bad edges. Take the endpoint of one such bad edge adjacent to $B$ and switch its color so that there are now $m$ bad edges incident to $V$, 1 bad edge incident to $B$, and 1 bad edge incident to $R$.
    
    \item Repeat the above, switching at each step the color of a neighbor of $V$ whose color has not been switched before, until all neighbors of $V$ have had their colors switched once. There are now $m$ bad edges incident to $B$ and $m+1$ bad edges incident to $R$.
\end{itemize}
\end{example}
Note that \cref{double} can be generalized to operate on a $K_{m,2m+1}$ where each vertex in the $m$-size part is blue, $m$ of the vertices in the $(2m+1)$-size part are red, and $m+1$ of the vertices in the $(2m+1)$-size part are blue to also roughly triple the number of bad edges. The bad edge multiplicative growth rate of this process operating on $\Theta(m)$ vertices is 3, whereas in our initial estimate it was $1+\eps$. Hence, using this analysis, it is actually possible for there to be $\Omega(n^2)$ bad edges at the last stage if operating on $\Theta(m)$ vertices can triple $\Theta(m^2)$ bad edges at each step. We thus cannot prove an upper bound of $O(\eps n^2)$ on the number of bad edges at the last stage using this strategy.

Instead, one idea is that we should focus not on the total number of bad edges at every step but rather consider the step at which each vertex \emph{first} plays an uncertain best response. This imposes a sequential order on the vertices from the first vertex to play an uncertain best response to the last. Recall that in order for a vertex with high degree to be able to play an uncertain best response, it must be the case that a large number of edges adjacent to this vertex must be bad. If we are to end up with $\Omega(\eps n^2)$ bad edges at the end, it must be the case that the later vertices in this sequence must have high degree on average. This requires a large number of existing bad edges to be generated from the vertices before the final vertices in the sequence, which in turn requires bad edges from the vertices before them, and so on.

Our proof of the $O(\eps^2 n^2)$ upper bound analyzes the degrees of vertices in this sequence and shows that the sequence of degrees cannot increase ``too fast'' if vertices can only play $(1+\eps)$-uncertain best responses, where we realize the notion of ``too fast'' via a clever construction.

\section{Proof of lower bound} \label{proof-lower-bound}

We first improve the lower bound on the PoU for consensus games. Furthermore, we do so in the expanded domain $\eps = \til{\Omega}(n^{-1/2})$. Here, $\til{\Omega}$ hides factors that are logarithmic in $n$.

\begin{theorem} \label{thm:lower-bound}
    The PoU for consensus games is $\Omega(\eps^2 n^2)$ for $\eps = \til{\Omega}(n^{-1/2})$.
\end{theorem}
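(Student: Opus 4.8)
The plan is to construct an explicit graph $G$ together with a ground state (coloring) of social cost $\Omega(1)$ and a sequence of uncertain best-response moves that drives the social cost up to $\Omega(\eps^2 n^2)$. The starting point is the snowball mechanism of \cite{Balcan-2011}: there, a single ``trigger'' player flips, which slightly changes the bad-edge counts of its neighbors, and because each neighbor is allowed to miscount by a factor of $1+\eps$, a neighbor who really has a tiny majority of good edges can be made to flip as if it had a majority of bad edges. The key quantitative limitation of their construction is that each flip only exploits an additive slack of roughly $\eps \cdot (\text{degree})$, and in their gadget the degree at which a flip can be triggered is small, giving only $\Omega(\eps^3 n^2)$. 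The idea here is to design a layered gadget in which the ``exploitable slack'' at each stage scales like $\eps$ times a quantity of order $\eps n$, so that a cascade of length $\Theta(\eps n)$ each contributing $\Theta(\eps n)$ to the cost yields the desired $\Theta(\eps^2 n^2)$.

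Concretely, I would build the graph in $\Theta(\eps n)$ layers, each a bipartite-like block on $\Theta(1/\eps)$ or $\Theta(\eps n)$ vertices (the exact split chosen so the total vertex count is $n$ and each layer has internal degrees of order $\eps n$), with carefully tuned edge multiplicities (or parallel paths, since consensus games live on simple graphs, so multiplicities are simulated by disjoint collections of vertices). Within a layer, vertices are balanced so that in the ground state almost every edge is good; between consecutive layers we wire a small number of ``control'' edges so that once layer $i$ has flipped, the net change in bad-edge count at each vertex of layer $i+1$ is positive but smaller than $\eps$ times that vertex's degree. Then the adversary, processing layer $i+1$ vertex by vertex, can always report the count so that flipping is an (improving) best response even though the true move is cost-neutral or slightly cost-increasing. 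After all layers flip, I would arrange that $\Theta(\eps n)$ bad edges have been created in each of $\Theta(\eps n)$ layers, for a final social cost of $\Theta(\eps^2 n^2)$; dividing by the $O(1)$ ground-state cost gives the bound. Verifying $\eps = \tilde\Omega(n^{-1/2})$ suffices amounts to checking that $\Theta(\eps n)$ layers of size $\Theta(\eps n)$ fit within $n$ vertices up to logarithmic factors, which is where the $n^{-1/2}$ threshold and the hidden logs enter.

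The main obstacle, and the part requiring the most care, is maintaining the cascade invariant: I must show that at every step of the dynamics the adversary has enough slack to keep the ``next'' vertex's flip an improving response, \emph{and} that already-flipped vertices do not want to flip back (or that if they do, it only helps the adversary). This is a delicate accounting problem because each flip perturbs the bad-edge counts of several neighbors simultaneously, and the allowed miscount $1+\eps$ is a multiplicative — not additive — slack, so the usable additive slack at a vertex of degree $d$ is only about $\eps d/2$, forcing the degrees in each layer to be $\Omega(1/\eps)$ just to permit a single flip and $\Omega(\eps n)$ to accumulate the cost we want. I would handle this by an inductive argument over layers: assuming layers $1,\dots,i$ are fully flipped and stable, compute the exact bad-edge count at each vertex of layer $i+1$ before and after its flip, exhibit the adversarial report witnessing the improving move, and then recheck the stability of layers $1,\dots,i$ under the new configuration. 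A secondary technical point is to confirm the process is a legitimate best-response (not merely improved-response) dynamics if the theorem demands it — or, if only improved response is needed, to note that the weaker requirement makes the construction easier. Finally I would package the parameters (number of layers, layer sizes, edge counts, the exact ground state) and present the resulting lower bound, together with the verification that it holds throughout $\eps = \tilde\Omega(n^{-1/2})$.
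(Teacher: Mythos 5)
There is a genuine gap, and it is in the quantitative bookkeeping at both ends of your cascade. You propose to start from a ground state of social cost $O(1)$ and divide the final cost $\Theta(\eps^2 n^2)$ by that $O(1)$. But no cost-increasing uncertain best response is possible from a state with $O(1)$ bad edges: for a flip to strictly increase the number of bad edges, the flipping vertex must have more good than bad neighbors yet perceive the opposite, which forces $g-b\le \eps b$ and hence $b\ge 1/\eps$ bad neighbors already present. You note this degree constraint yourself ("forcing the degrees in each layer to be $\Omega(1/\eps)$") but do not propagate it to the starting state: the initial configuration must contain $\Omega(1/\eps)$ bad edges, so to obtain a ratio of $\eps^2 n^2$ you must end with $\Theta(\eps n^2)$ bad edges, not $\Theta(\eps^2 n^2)$. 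Your stated endpoint of $\Theta(\eps n)$ bad edges in each of $\Theta(\eps n)$ layers, combined with the mandatory $\Omega(1/\eps)$ start, yields only $\Omega(\eps^3 n^2)$ --- exactly the old bound you are trying to beat. (Your vertex count also fails in the claimed range: $\Theta(\eps n)$ layers of size $\Theta(\eps n)$ require $\Theta(\eps^2 n^2)$ vertices, which exceeds $n$ precisely when $\eps=\omega(n^{-1/2})$; the paper instead grows layer sizes geometrically by $1+\eps/2$ from $\Theta(1/\eps)$ up to $\Theta(\eps n)$, so the total is a geometric series summing to $\Theta(n)$.)

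The missing idea that closes this gap in the paper is a reusable \emph{pivot} layer. After the snowball has run once, the largest bipartite block contributes only $\Theta(\eps^2 n^2)$ bad edges; all earlier inter-layer edges have reverted to good. The paper constructs, via its initializer, a layer whose vertices each have $1/\eps$ permanently frozen red neighbors and $1/\eps$ permanently frozen white neighbors, so those vertices can legally flip in either direction forever. Repeatedly flipping this pivot layer and re-propagating the cascade lets the adversary sweep colors back and forth until the boosting layers alternate red/white, at which point \emph{every} edge between consecutive boosting layers is simultaneously bad; summing the geometric series gives $\Theta(\eps n^2)$ bad edges, and dividing by the $\Theta(1/\eps)$ starting cost gives $\Omega(\eps^2 n^2)$. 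Without some mechanism of this kind for re-badding the edges consumed by earlier stages of the cascade, your single-pass construction cannot exceed $\Omega(\eps^3 n^2)$.
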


\citet{Balcan-2011} used a carefully tuned gadget to create a snowball-like effect of uncertainty. To prove \cref{thm:lower-bound}, we provide a more complicated construction that amplifies the snowball effect even further. One general idea is to construct a graph as a sequence of levels, where each level contains slightly more vertices than the previous level. At the beginning, all of the vertices in the first level should have enough bad edges to play uncertain best responses, which increases the number of bad edges by a multiplicative factor of $1+\eps$. The fact that the vertices in the first level switched color implies that the vertices in the second level now have enough bad neighbors to play an uncertain best response, and this effect snowballs across the levels.

We will make the above idea rigorous. We define a \emph{boosting operation} between layers of the following form. Let $\ell>1$ and suppose that level $\ell-1$ has $\fr{1}{1+\fr{\eps}{2}}\cd m$ vertices for some $m$, level $\ell$ has $m$ vertices, and level $\ell+1$ has $\bp{1+\fr{\eps}{2}}\cd m$ vertices. Suppose that every vertex in level $\ell$ is connected to every vertex in level $\ell-1$ and level $\ell+1$, so the degree of every vertex in level $\ell$ is $d_\ell = \bp{\fr{1}{1+\fr{\eps}{2}} + (1+\fr{\eps}{2})} \cd m$. If the vertices in levels $\ell$ and $\ell+1$ are all blue and the vertices in level $\ell-1$ are all red, then the edges between levels $\ell-1$ and $\ell$ are all bad. Since $(1+\eps) \cd \fr{1}{1+\fr{\eps}{2}}\ge \fr{d_\ell}{2}$ for all $\eps \le 1$, then all vertices in level $\ell$ are allowed to play uncertain best responses in sequence. This turns level $\ell$ all red. Note now that the number of bad edges between levels $\ell$ and $\ell+1$ is $\bp{1+\fr{\eps}{2}}\cd m^2$, which is a $1+O(\eps)$ multiplicative increase from the number of original bad edges $\fr{1}{1+\fr{\eps}{2}}\cd m^2$ from before this boosting operation.

We now present the construction. In order to implement the idea above, we must first find some way to create the first two layers of size $\fr{1}{\eps}$ because a vertex must have degree at least $\fr{1}{\eps}$ in order to play an uncertain best response. The \emph{initializer} layers will have numbers of vertices 
\[\fr{1}{\eps}, 1, \fr{1}{\eps} + 1, 1, \lds, \fr{2}{\eps} -1, 1, \fr{2}{\eps}, 2, \fr{1}{\eps}, 2, \fr{1}{\eps} + 1, \lds, \fr{2}{\eps} -1, 2, \fr{2}{\eps}, 4, \fr{1}{\eps}, \lds, \fr{1}{\eps}, \fr{1}{\eps},\] 
where the numbers of vertices in the odd-indexed layers cycle through the consecutive integers from $\fr{1}{\eps}$ to $\fr{2}{\eps}$, and the numbers of vertices in even-indexed layers increase by a factor of 2 whenever the odd-indexed layers' cycle repeats up until the number reaches $\fr{1}{\eps}$. Each vertex is connected to all vertices in the layer immediately after it except for vertices in odd-indexed layers with $\fr{2}{\eps}$ vertices, which are connected to the next layer to form a regular bipartite graph with the next layer where the vertices in the next layer are each connected to $\fr{1}{\eps}$ vertices in the current layer. Note that the total number of
initializer layers is $\fr{1}{\eps} \cd \log_2 \fr{1}{\eps}$, and each layer has at most $\fr{2}{\eps}$ vertices, so the total number of vertices in the initializer layers is $O\bp{\fr{1}{\eps^2} \log \fr{1}{\eps}}$. The initializer sequence has been constructed in such a way such that by starting from a state in which the first layer is red and all other layers are blue, the vertices in each layer can play uncertain best responses to change their color to red in sequence so that the last two layers form a $K_{\fr{1}{\eps}, \fr{1}{\eps}}$ where the left side is red and the right half is blue. Finally, we construct $\fr{1}{\eps}$ blue vertices in a parallel layer to the red left side of the $K_{\fr{1}{\eps}, \fr{1}{\eps}}$ and connect each of them to every vertex in the right half of the $K_{\fr{1}{\eps}, \fr{1}{\eps}}$, so that each vertex in the right half now has $\fr{1}{\eps}$ red neighbors and $\fr{1}{\eps}$ blue neighbors on the left. Note that no new bad edges are added. We will now leave all layers before the last layer frozen: no vertex in a layer before the last layer will ever change color again.

After the last layer in the initializer, construct a \emph{secondary} set of layers with numbers of vertices
\[\fr{1}{\eps}, 1, \fr{1}{\eps} + 1, 1, \lds, \fr{2}{\eps} -1, 1, \fr{2}{\eps}, 2, \fr{1}{\eps}, 2, \fr{1}{\eps} + 1, \lds, \fr{2}{\eps} -1, 2, \fr{2}{\eps}, 4, \fr{1}{\eps}, \lds, \fr{2}{\eps}, \fr{2}{\eps}.\]
Note that this sequence is almost identical to that of the initializer, except we keep going until a $K_{\fr{2}{\eps}, \fr{2}{\eps}}$ is formed by the last two layers. The first layer in the secondary set is the same as the last layer of the initializer, each vertex of which is connected with $\fr{1}{\eps}$ frozen blue vertices and $\fr{1}{\eps}$ frozen red vertices. We now have the following observation:

\begin{proposition} \label[proposition]{freely-change}
    No matter what happens in future layers, each of the $\fr{1}{\eps}$ vertices in the last layer of the initializer can freely change colors.
\end{proposition}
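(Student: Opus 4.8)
Write $R$ for the last layer of the initializer, $L$ for the red left side of the $K_{1/\eps,1/\eps}$ it belongs to, $P$ for the parallel white layer, and $w$ for the single vertex in the layer of the secondary set immediately after $R$. The plan is to exploit how rigid the neighborhood of a vertex $v\in R$ is: by construction $v$ is adjacent to exactly the $1/\eps$ vertices of $L$, the $1/\eps$ vertices of $P$, and $w$, so $\deg(v)=2/\eps+1$; and since every layer before $R$ is frozen, the $L$-neighbors of $v$ stay red forever and the $P$-neighbors stay white forever, so the only neighbor of $v$ whose color ``future layers'' can influence is $w$.

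First I would note that the $2/\eps$ frozen neighbors always contribute exactly $1/\eps$ bad edges at $v$, regardless of $v$'s color --- namely the edges joining $v$ to whichever of $L$ and $P$ is colored opposite to $v$ --- while $w$ contributes a bad edge precisely when its color differs from $v$'s. Thus $v$ has $1/\eps$ bad edges when $v$ and $w$ agree and $1/\eps+1$ when they disagree, and flipping the color of $v$ interchanges these two values. So in every reachable state the bad-edge count of $v$ immediately before and immediately after a color flip is $1/\eps$ and $1/\eps+1$ in one order or the other.

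Then I would invoke the uncertainty model. Because $\frac{1/\eps+1}{1/\eps}=1+\eps$, flipping $v$ changes its true cost by a factor of exactly $1+\eps$, well within the tolerance an adversary may exploit: if the flip would raise $v$'s cost from $1/\eps$ to $1/\eps+1$, the adversary lets $v$ perceive its current cost as $(1+\eps)\cdot(1/\eps)=1/\eps+1$ and its post-flip cost as $(1/\eps+1)/(1+\eps)=1/\eps$, making the flip a strict uncertain best response; and if the flip would instead lower $v$'s cost, it is already a genuine best response. In either case, and whatever the current color of $v$, the flip is a legal (uncertain) best-response move, independent of the color of $w$ and hence of everything that happens in future layers --- which is exactly \cref{freely-change}.

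The only point requiring care is the bookkeeping that $v$'s neighborhood is precisely $L\cup P\cup\{w\}$ with $|L|=|P|=1/\eps$ and with $L,P$ frozen red and white respectively; this is immediate from the construction but is the crux, since the equality $|L|=|P|$ is exactly what forces the two possible bad-edge counts of $v$ to be consecutive integers, hence within a factor $1+\eps$. There is essentially no further analytic content.
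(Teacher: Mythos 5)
Your proof is correct and takes essentially the same approach as the paper: the paper's one-line argument likewise observes that each such vertex has degree $2/\eps+1$ with at least $1/\eps$ permanently frozen neighbors of each color, so that its bad-neighbor count is always within the $(1+\eps)$ tolerance needed to justify a switch in either direction. Your version merely makes the arithmetic explicit (bad counts $1/\eps$ versus $1/\eps+1$, ratio exactly $1+\eps$), which is a slightly more careful rendering of the identical idea.
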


\begin{proof}
    Each vertex in this layer has total degree $\fr{2}{\eps}+1$ and will always have at least $\fr{1}{\eps}$ frozen neighbors of every color, which is enough to play an uncertain best response no matter what the current color is.
\end{proof}

Our careful construction of the initializer and \cref{freely-change} are the key observations needed to improve the bound in \citet{Balcan-2011}. We will now finish the construction. Note that the $K_{\fr{2}{\eps}, \fr{2}{\eps}}$ produced at the end of the secondary set is a starting point from which we can perform the \emph{boosting operations} as defined above. Here we add a layer of $\fr{2}{\eps} + 1$ vertices to form an initial $K_{\fr{2}{\eps}, \fr{2}{\eps} + 1}$, and then initialize $m$ to be $\fr{2}{\eps} + 1$. Starting from this layer, additional layers will increase in size by a constant multiplicative factor of $1 + \fr{\eps}{2}$ until we reach $c \cd \eps n$ vertices in the last layer for some small constant $c$ to be set later. The total number of vertices used in the boosting operations is
\[\Theta\bp{\fr{c \cd \eps n}{1 - \fr{1}{1 + \fr{\eps}{2}}}}=\Theta\bp{\fr{c \cd \eps n}{\frac{\eps}{2+\eps}}}=\Theta(cn).\] 
Combined with the number of vertices $O\bp{\fr{1}{\eps^2} \log \fr{1}{\eps}}$ used in the initializer as well as the secondary set, we see that for $\eps = \til{\Omega}(n^{-1/2})$ and a constant $c$ independent of $\eps$ and $n$ that the total number of vertices used is at most $n$, which means our construction has the right number of vertices.

How many bad edges does our construction yield? Starting from the first layer in the secondary set, we can turn each successive layer red in sequence so that the last two layers form a $K_{c\cd \eps n, c\cd \eps n}$ where the left side is red and the right half is blue. However, this produces only $c^2\cd \eps^2 n^2$ bad edges, which, recalling that we started with $\fr{1}{\eps}$ bad edges, matches the bound $\Omega(\eps^3 n^2)$ on the price of uncertainty in \citet{Balcan-2011}. It turns out we can produce many more bad edges in our construction by doing the following. Note that the last layer in our construction is blue, and all layers between the frozen layers defined above and the last layer are now red. We now turn each vertex in the layer after the frozen layers blue, which is a valid uncertain best response by \cref{freely-change}. We can now turn each successive layer blue in sequence until the last three layers are blue, red, and blue, respectively. By repeatedly switching the colors of the vertices in the layer after the frozen layers and propagating uncertain best responses, we can reach a state in which the boosting operation layers alternate between blue and red. Every edge between boosting operation layers is now bad, which makes
\[\Theta\bp{\fr{c^2 \cd \eps^2 n^2}{1 - \bp{\fr{1}{1 + \fr{\eps}{2}}}^2}}=\Theta\bp{\fr{c^2 \cd \eps^2 n^2}{\fr{\eps(4+\eps)}{(2+\eps)^2}}}=\Theta(\eps n^2)\] 
bad edges at the ending state. Recalling that we began with $\fr{1}{\eps}$ bad edges, we have established that the price of uncertainty for consensus games is $\Omega(\eps^2 n^2)$ in the domain $\eps = \til{\Omega}(n^{-1/2})$.

\section{Proof of upper bound}

Our second main result is establishing the upper bound on the price of uncertainty.

\begin{theorem} \label{thm:upper-bound}
The price of uncertainty in consensus games satisfies
\[
\PoU(\eps,\text{consensus}) = O(\eps^2 n^2)
\quad \text{for } \eps = \Omega(n^{-1/4}).
\]
\end{theorem}

As discussed in \cref{challenges}, we focus only on the steps at which each vertex plays its first uncertain best response. In addition, we also carefully construct and analyze a construction that restricts the degrees of new vertices and allows us to bound the number of edges in the graph. Let $G$ be a graph for a consensus game, and let $V$ be the set of vertices of $G$. It will be useful to first establish a bound on the starting number of bad edges.

\begin{claim}\label[claim]{start-bad-edges}
    Suppose that it is possible to increase the number of bad edges in $G$ by a factor of $\Omega\bp{\eps^2n^2}$. Then, the starting number of bad edges must be $O\bp{\eps^{-2}}$.
\end{claim}

\begin{proof}
    There can be at most $O\bp{n^2}$ bad edges in a graph with $n$ vertices, so the starting number of bad edges must be
    \[\frac{O\bp{n^2}}{\Omega\bp{\eps^2n^2}}=O\bp{\eps^{-2}}.\]
\end{proof}

We begin by partitioning $V$ into three disjoint subsets:
\begin{itemize}
    \item Let $P$ be the set of vertices in $V$ that begin without any bad edges and play at least one best response.
    \item Let $Q$ be the set of vertices in $V$ that begin without any bad edges and never play a best response.
    \item Let $S_0$ be the set of all remaining vertices in $V$, i.e. those that begin incident to at least one bad edge.
\end{itemize}
Let us number the vertices in $P$ in order of their first best responses from first to last as $p_1,p_2,\dots,p_m$ for $m=|P|$. Then, for all integers $1\leq k\leq m$, define $S_k=S_0\cup\{p_1,p_2,\dots,p_k\}$. Notice that the first vertex $p_1$ can only change its color after several vertices in $S_0$ change colors, because $p_1$ must begin with only good edges. Now that we have set up our notation, we will prove some preliminary statements regarding the behavior of the consensus game.

\begin{proposition}\label[proposition]{contained-bad-edges}
    Let $k$ be an integer such that $0\leq k\leq m$. Before $p_{k+1}$ plays its first best response, every bad edge is incident to at least one vertex in $S_k$. Furthermore, every bad edge is always incident to at least one vertex in $S_m$.
\end{proposition}

\begin{proof}
    Before any best responses are played, all bad edges have endpoints in $S_0$ by definition. Similarly, at this point, edges between two vertices in $V\setminus S_0$ must all be good.
    
    Let $0\leq j\leq m-1$ be an integer. Suppose that \cref{contained-bad-edges} holds when $k=j$. Then, any edge between two vertices in $V\setminus S_j$ is good. By construction, the first vertex outside of $S_j$ to play a best response is $p_{j+1}$. When this happens, the only edges to be affected will be those incident to $p_{j+1}$. However, since $p_{j+1}\in S_{j+1}$, every bad edge must now be incident to at least one vertex in $S_{j+1}$. Finally, as long as no vertices outside of $S_{j+1}$ change colors, every bad edge will remain incident to at least one vertex in $S_j$. By strong induction, \cref{contained-bad-edges} must hold for all integers $k$ between $0$ and $m-1$ inclusive.

    Finally, by construction, all vertices in $V\setminus S_m=Q$ never play a best response. This means that any edges between 2 of these vertices will remain good, so \cref{contained-bad-edges} holds for $k=m$ as well.
\end{proof}

\begin{lemma}\label[lemma]{m-equals-n}
    Suppose that $\eps=\omega(n^{-1/2})$. Then, the factor by which the number of bad edges in $G$ can be increased is maximized when $m=\Theta(n)$.
\end{lemma}

\begin{proof}
    Clearly, $m=|P|<|V|=n$, so $m=O(n)$ and the upper bound holds. We will now prove the lower bound.
    
    By construction, $V$ consists of the two disjoint subsets $S_m$ and $Q$. By \cref{contained-bad-edges}, any vertices in $Q$ that do not share an edge with a vertex in $S_m$ can never be incident to a bad edge. This means that we can delete these vertices along with all of their incident edges, which will never decrease the factor by which the number of bad edges in $G$ can be increased. We will take this a step further by deleting all vertices in $Q$ that do not share an edge with a vertex in $P$. By similar logic, this cannot decrease the PoU because all vertices in $V\setminus Q$ can still have at most as many good edges as before. Thus, the same sequence of uncertain best responses can be played to produce the same number of final bad edges even though we now have fewer vertices in $V$.

    After all of our deletions, we are now left with a new set of vertices $Q'\subset Q$ where every edge incident to a vertex in $Q'$ is also incident to a vertex in $S_m\setminus S_0=P$. Then, there exists a vertex $v\in P$ with the most edges connecting it to a vertex in $Q'$. Let the number of such edges be $w$. We then construct a set $R$ of vertices $\{r_1,r_2,\dots,r_w\}$, and we can add new edges such that each vertex in $P$ connected by $j$ edges to vertices in $Q'$ is connected to $r_1,r_2,\dots,r_j$. Finally, we delete all vertices in $Q'$ and all of their incident edges. The factor by which the number bad edges in $G$ can be increased doesn’t change because all vertices that change colors still have the same number of good edges. Furthermore, $|R|=w\leq|Q'|$, so the PoU cannot decrease. Finally, since $v$ always has at least $w=|R|$ good edges, it must have at least $|R|(1+\eps)^{-1}=\Theta(|R|)$ bad edges when it plays its first uncertain best response. However, by construction, $v$ cannot have more than $|S_m|$ bad edges, so $|R|=O(|S_m|)$. Since the entire graph now only consists of vertices in $R$ and $S_m$, this shows that $n=\Theta(|S_m|)$.
    
    Next, recall that $|S_m|=|S_0|+m$. By \cref{start-bad-edges}, we must have $O\bp{\eps^{-2}}$ edges that start as bad, and the number of vertices that begin with a bad neighbor is at most twice this quantity. So, $|S_0|=O\bp{\eps^{-2}}$ as well, and we thus have
    \begin{align*}
        \Theta(n)&=|S_m|=|S_0|+m=O\bp{\eps^{-2}}+m.
    \end{align*}
    Since $\eps=\omega\bp{n^{-1/2}}$, this becomes
    \[\Theta(n)=o\bp{n}+m\implies\Theta(n)=m.\]
\end{proof}

With our construction, \cref{contained-bad-edges} allows us to contain the bad edges in the graph at any step inside one of $S_0,S_1,\dots,S_m$. We will now present a combinatorial abstraction that will allow us to focus on the effect of the first best response played by each vertex in $P$.

For every integer $k$ satisfying $0\leq k\leq m$, we can use $S_k$ to construct a multiset $A_k$ to encode information about certain bad edges in $S_k$. Specifically, each element of $A_k$ will describe the number of edges between a particular vertex in $V\setminus S_k$ and any vertex in $S_k$, and each vertex in $V\setminus S_k$ with at least one connection to a vertex in $S_k$ will be encoded exactly once in such a manner. 

\begin{example}\label[example]{multiset-example}
    Suppose that there existed four edges between a vertex in $S_0$ and some vertex $x_1\not\in S_0$, and three edges between a vertex in $S_0$ and some different vertex $x_2\not\in S_0$. Also, suppose that there existed no other edges between a vertex in $S_0$ and a vertex in $V\setminus S_0$. Then we would have
    \(A_0=\{4,3\}.\)
\end{example}

\begin{figure}
      \centering
      \begin{tikzpicture}
        \draw (0,0) ellipse (0.6cm and 1.2cm) node[label={[yshift=1.2cm,black]above:$S_0$}] {};
        \filldraw[fill=black] (2,0.8) circle (0.1cm) node[label={[yshift=0.1cm,black]above:$x_1$}] {};
        \filldraw[fill=black] (2,-0.8) circle (0.1cm) node[label={[yshift=0.1cm,black]above:$x_2$}] {};
        \draw (0,0.9)--(2,0.8);
        \draw (0,0.5)--(2,0.8);
        \draw (0,0.2)--(2,0.8);
        \draw (0,0.2)--(2,-0.8);
        \draw (0,-0.2)--(2,0.8);
        \draw (0,-0.5)--(2,-0.8);
        \draw (0,-0.9)--(2,-0.8);
    \end{tikzpicture}
      \caption{In \cref{multiset-example}, four edges connect vertices in $S_0$ to $x_1$, three edges connect vertices in $S_0$ to $x_2$, and two of the edges are incident to the same vertex in $S_0$.}
    \end{figure}

Consider the state of the graph right before $p_k$ plays its first best response. By \cref{contained-bad-edges}, all edges between $p_k$ and a vertex in $V\setminus S_{k-1}$ must be good. If we let $z$ be the number of edges incident to $p_k$ and a vertex in $S_{k-1}$, then the degree of $p_k$ cannot be greater than $z+\lfloor(1+\eps)z\rfloor$. By this logic, going from $A_{k-1}$ to $A_k$ consists of deleting the element of $A_{k-1}$ corresponding to the edges between a vertex in $S_{k-1}$ and $p_k$, which is $z$. Then, we will select $\lfloor(1+\eps)z\rfloor$ elements of $A_{k-1}$ and increment them by $1$ (where we temporarily pad $A_{k-1}$ with zeroes to allow for the creation of new elements equal to 1). 

\begin{example}
    Suppose that $A_0=\{3,2,2\}$, $\eps=1/3$, and we wish to play a move on $z=3$. Then, $\lfloor(1+\eps)z\rfloor=4$, and the possible multisets for $A_1$ are:
\[\{3,3,1,1\},\quad\{3,2,1,1,1\},\;\;\text{and}\;\;\{2,2,1,1,1,1\}.\]
\end{example}

The new multiset $A_k$ represents the edges between $p_k$ and a vertex in $V\setminus S_k$. Since any graph and sequence of best responses can be used to construct the multisets $(A_k)_{0\leq k\leq m}$ in such a manner, it suffices to study the behavior of these moves.

\begin{definition}
    A \emph{regular move} on a multiset $A_k$ with a positive integer $z\in A_k$ consists of the following steps:
    \begin{itemize}
        \item Delete one instance of $z$ from $A_k$.
        \item Pad $A_k$ with an sufficiently large (at least $\lfloor z(1+\eps)\rfloor$) number of zeroes.
        \item Create $A_{k+1}$ from $A_k$ by selecting $\lfloor z(1+\epsilon)\rfloor$ elements of $A_k$ and adding 1 to each of them.
        \item Remove all remaining zeroes from $A_{k+1}$.
    \end{itemize}
\end{definition}

As a shorthand notation, let $\sum A_k$ denote the sum of the elements (with their multiplicities) of $A_k$, and let $\sum A_k^2$ denote the sum of squares of the elements (with their multiplicities) of $A_k$. The former will be a useful measure to bound because of \cref{sum-to-finish}, and the latter will assist us in obtaining such a bound.

\begin{proposition}\label[proposition]{sum-E0}
    $\sum A_0=O\bp{n|S_0|}$.
\end{proposition}

\begin{proof}
    Since there are only $n$ vertices in the graph, every vertex in $S_0$ can have at most $n$ edges. Recall that $\sum A_0$ counts the number of edges between a vertex in $S_0$ and a vertex in $V\setminus S_0$. \cref{sum-E0} follows.
\end{proof}

\begin{lemma}\label[lemma]{sum-to-finish}
    If $\sum A_m=O\bp{\eps^3n^2|S_0|}$, then the final number of bad edges in the graph is $O\bp{\eps^2n^2|S_0|}$ if $\eps=\omega(n^{-1/3})$.
\end{lemma}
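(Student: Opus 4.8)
The plan is to use \cref{contained-bad-edges}, \cref{m-equals-n}, and \cref{start-bad-edges} to reduce the bound to two pieces, and then to control the harder piece through the arithmetic of moves. By \cref{m-equals-n} we have $m=\Theta(n)$, and by \cref{start-bad-edges} we have $|S_0|=O(\eps^{-2})$. By \cref{contained-bad-edges}, in the final state every bad edge has at least one endpoint in $S_m$, so I split the final bad edges into the \emph{crossing} edges, those with exactly one endpoint in $S_m$, and the \emph{internal} edges, those with both endpoints in $S_m$.

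Each crossing edge is one of the edges recorded by the sequence $E_m$, so the number of crossing bad edges is at most $\sum E_m = O(\eps^3 m^2|S_0|) = O(\eps^2 n^2|S_0|)$, using $m=\Theta(n)$ and $\eps\le 1$. For the internal edges, write $S_m = S_0\cup\{v_1,\dots,v_m\}$: every internal edge either joins two vertices of $S_0$, and there are at most $|S_0|^2$ of these, or else joins some $v_k$ to a vertex of $S_{k-1}$, in which case it is counted exactly once, at step $k$, by the integer $z_k$ deleted in the $k$th move. So the number of internal edges is at most $|S_0|^2 + \sum_{k=1}^m z_k$, and since $|S_0|=O(\eps^{-2})$ and $\eps=\omega(n^{-1/2})$ the term $|S_0|^2 = O(\eps^{-2}|S_0|)$ is $o(\eps^2 n^2|S_0|)$; it remains to bound $\sum_k z_k$.

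The key observation is that $v_k$ can play its first uncertain best response only if it can miscount its bad neighbors while strictly increasing the social cost, which (exactly as in the proof of \cref{m-equals-n}) forces it to have at least $\eps^{-1}$ bad neighbors; since all of these lie in $S_{k-1}$ by \cref{contained-bad-edges}, we get $z_k\ge\eps^{-1}$ for every $k$. The $k$th move changes the running total by exactly $\sum E_k - \sum E_{k-1} = \lfloor(1+\eps)z_k\rfloor - z_k = \lfloor\eps z_k\rfloor \ge \frac{1}{2}\eps z_k$, the last step using $\eps z_k\ge 1$. Telescoping over $k$ gives $\frac{1}{2}\eps\sum_k z_k \le \sum E_m - \sum E_0 \le \sum E_m$, so $\sum_k z_k \le 2\eps^{-1}\sum E_m = O(\eps^2 m^2|S_0|) = O(\eps^2 n^2|S_0|)$. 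Combining with the bound on the crossing edges, the final number of bad edges is $O(\eps^2 n^2|S_0|)$, as desired.

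I expect the internal-edge bound to be the main obstacle: the crossing edges come essentially for free from \cref{contained-bad-edges} and the hypothesis, but the edges inside $S_m$ are not directly visible in the sequences $E_k$. The idea that makes it work is that each such edge is ``born'' as the deleted integer $z_k$ of some move, and that $z_k$ cannot be small because an uncertain best response needs $\Omega(\eps^{-1})$ bad neighbors --- precisely what absorbs the floor loss in $\lfloor\eps z_k\rfloor$ and lets a single telescoping sum turn the hypothesis $\sum E_m = O(\eps^3 m^2|S_0|)$ into $\sum_k z_k = O(\eps^2 n^2|S_0|)$. Establishing that hypothesis --- that $\sum E_m$ is genuinely only $O(\eps^3 m^2|S_0|)$ --- is the business of the monovariant analysis in the lemmas that follow.
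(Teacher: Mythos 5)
Your proof is correct and follows essentially the same route as the paper's: both identify the deleted integer $z_k$ with the new edges absorbed into $S_k$ and telescope the relation $\sum E_k - \sum E_{k-1} = \lfloor \eps z_k\rfloor$ to convert the hypothesis on $\sum E_m$ into a bound on the edges internal to $S_m$, while the $S_m$-to-$L$ edges are bounded directly by $\sum E_m$. The only differences are minor: you absorb the floor using $z_k\geq\eps^{-1}$ where the paper instead carries a $-m$ error term and kills it with $\eps=\omega\bp{n^{-1/2}}$, and you bound the edges inside $S_0$ by $|S_0|^2$ where the paper bounds $T_0$ via the starting bad-edge count.
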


\begin{proof}
    Let $T_k$ denote the number of edges between two vertices in $S_k$. For every regular move that we play on $A_{k-1}$ with an integer $z\in A_{k-1}$, the sum $\sum A_k$ grows from $\sum A_{k-1}$ by $\lfloor z(1+\eps)\rfloor-z$, and the number of edges between two vertices in $S_k$ grows from $S_{k-1}$ by $z$. This is because, by definition, $z$ denotes the number of edges that connect $S_k$ to the new vertex, and it is precisely these edges that are counted by $T_k$ and not $T_{k-1}$. Thus, we have
    \[\lfloor\eps z\rfloor=\lfloor z(1+\eps)\rfloor-z=\sum A_k-\sum A_{k-1},\]
    as well as
    \[\eps\bp{T_k-T_{k-1}}-1=\eps z-1<\lfloor\eps z\rfloor,\]
    which by substitution implies the inequality
    \[\eps\bp{T_k-T_{k-1}}-1<\sum A_k-\sum A_{k-1}.\]
    By summing both sides from $k=1$ to $m$, we get
    \begin{align*}
        \eps\left(T_m-T_0\right)-m&\leq\sum A_m-\sum A_0.
    \end{align*}
    Since $\sum A_m=O\bp{\eps^3n^2|S_0|}$,
    \begin{align*}
        \eps\left(T_m-T_0\right)-m&\leq O\bp{\eps^3n^2|S_0|}-\sum A_0.
    \end{align*}
    By \cref{m-equals-n}, this becomes
    \begin{align*}
        \eps\left(T_m-T_0\right)-\Theta(n)&\leq O\bp{\eps^3n^2|S_0|}-\sum A_0 \\
        \eps T_m&\leq O\bp{\eps^3n^2|S_0|}-\sum A_0+\Theta(n)+\eps T_0 \\
        T_m&\leq O\bp{\eps^2n^2|S_0|}-\eps^{-1}\sum A_0+\Theta\bp{\eps^{-1}n}+T_0.
    \end{align*}
    By \cref{sum-E0}, $\eps^{-1}\sum A_0$ must be $O(\eps^{-1}n|S_0|)$. Since we assumed that $\eps=\omega(n^{-1/3})$, this is dominated by the $O(\eps^2n^2|S_0|)$ term.
    Also, since we must have $\Omega(\eps^{-1})$ starting bad edges, $S_0$ is $\Omega(\eps^{-1})$ as well. This means that $O(\eps n^2)$ must also be $O(\eps^2n^2|S_0|)$, and then $\eps=\omega(n^{-1/3})$ implies that $O(\eps n^2)$ will dominate $\Theta(\eps^{-1}n)$. We are left with the following terms:
    \begin{align*}
        T_m&\leq O\bp{\eps^2n^2|S_0|}+T_0.
    \end{align*}

    We will now show that $T_0$ cannot grow faster than $O(\eps^2n^2|S_0|)$. By construction, every vertex in $S_0$ must begin incident to at least one bad edge, so the starting number of bad edges is at least $T_0/2=\Theta(T_0)$. By \cref{start-bad-edges}, we start with $O(\eps^{-2})$ bad edges, so $T_0=O(\eps^{-2})$. This is also $O(\eps n^2)$ because $\eps=\omega(n^{-1/3})$. Then, $O(\eps n^2)$ is $O(\eps^2n^2|S_0|)$ because $|S_0|=\Omega(\eps^{-1})$. Our equation finally becomes
    \begin{align*}
        T_m&=O\bp{\eps^2n^2|S_0|}.
    \end{align*}
    
    Any vertex in $V$ is either in $S_m$, or $Q$. This means that the only bad edges that are not counted by $T_m$ must be between $S_m$ and a vertex in $Q$, which are entirely described by the elements in $A_m$. Thus, the number of bad edges between a vertex in $S_m$ and a vertex in $Q$ is bounded by $\sum A_m$, which is $O(\eps^3n^2|S_0|)$. The total number of bad edges is then bounded by 
    \[T_m+\sum A_m=O\bp{\eps^2n^2|S_0|}+O\bp{\eps^3n^2|S_0|}=O\bp{\eps^2n^2|S_0|}.\]
\end{proof}

Intuitively, playing a regular move on $\sum A_{k-1}$ with a large $z$ would result in a big increase in from $\sum A_{k-1}$ to $\sum A_k$. However, playing such a move might also create many small elements in $A_k$. Having the sum $\sum A_k$ spread out across many small elements will make increasing $\sum A_k$ in future moves more difficult. Recall that $\sum A_k^2$ denotes the sum of squares of the not necessarily distinct elements of $A_k$. We consider $\sum A_k^2$ as a way to measure how "spread out" the multiset $A_k$ is, and this particular quantity is preferred over other measures because the change from $\sum A_{k-1}^2$ to $\sum A_{k}^2$ can be related to $\sum A_{k-1}$. 

Our plan is to bound the behavior of $\sum A_k$ and $\sum A_k^2$ as $k$ increases. The multisets $A_0$ and $A_m$ will be subject to the following inequalities:
\begin{align}
    \sum A_m^2&\geq0 \label{Em^2-greater-than-Em} \\
    \sum A_0^2&\leq\left(\sum A_0\right)^2 \label{E02-less-than-(E0)^2}.
\end{align}
\cref{Em^2-greater-than-Em} is true because the elements of $A_m$ must be nonnegative integers, and \cref{E02-less-than-(E0)^2} follows from expanding both sides. We will only consider these inequalities (and not those with $m$ and $0$ swapped) because we only need to prevent \emph{too many} regular moves from being played with large $z$, which will naturally cause $\sum A_k^2$ to become small as $k$ increases.

We will begin by investigating the behavior of the sum of squares. Let the current multiset be $A_{k-1}$. Then, pick some element $z\in A_{k-1}$, and define $\zeta=\lfloor z(1+\eps)\rfloor$. We can play a regular move on $A_{k-1}$ with $z$, which will involve incrementing the $\zeta$ elements $a_1,a_2,\dots,a_\zeta\in A_{k-1}$ to $a_1+1,a_2+1,\dots,a_\zeta+1\in A_k$. It is easy to see that the effect of such a move on the increase from $\sum A_{k-1}$ to $\sum A_k$ is
\begin{align}
    \sum A_k-\sum A_{k-1}&=-z+\sum_{i=1}^\zeta\big((a_i+1)-a_i\big) \nonumber \\
    \sum A_k-\sum A_{k-1}&=-z+\zeta. \label{new-sum-relation}
\end{align}
Next, we must have
\begin{align}
    \sum A_k^2-\sum A_{k-1}^2&=-z^2+\sum_{i=1}^\zeta\left((a_i+1)^2-a_i^2\right) \nonumber \\
    \sum A_k^2-\sum A_{k-1}^2&=-z^2+2\sum_{i=1}^\zeta a_i+\zeta. \label{new-sumq-relation}
\end{align}
\cref{new-sum-relation} and \cref{new-sumq-relation} together describe a single regular move. However, we can establish bounds on these equations that are easier to analyze. Since $\eps<1$, we have $\zeta=\lfloor z(1+\eps)\rfloor\leq2z$, so we can rewrite \cref{new-sumq-relation} as
\begin{align}
    \sum A_{k}^2-\sum A_{k-1}^2&\leq2\sum_{i=1}^\zeta a_i-z^2+2z. \nonumber
\end{align}
Next, since $a_1,a_2,\dots,a_\alpha$ are all elements of $A_{k-1}$, their sum must be bounded by $\sum A_{k-1}$. By plugging this into our expression for the change from $\sum A_{k-1}^2$ to $\sum A_k^2$, we get
\begin{align}
    \sum A_{k}^2-\sum A_{k-1}^2&\leq2\sum A_{k-1}-z^2+2z. \label{sum-squares-move}
\end{align}

\begin{proposition}\label[proposition]{optimal-bound-sum-squares}
    Taking the upper bound in \cref{sum-squares-move} for the increase from $\sum A_{k-1}^2$ to $\sum A_k^2$ never decreases $\sum A_m$.
\end{proposition}

\begin{proof}
    Notice that the changes described in the right hand sides of \cref{new-sum-relation}, \cref{new-sumq-relation}, and \cref{sum-squares-move} are all independent of the current value of $\sum A_{k-1}^2$. Furthermore, the only constraint on the final sum of squares $\sum A_m^2$ is a lower bound provided by \cref{Em^2-greater-than-Em}. This means that always taking our upper bound on $\sum A_k$ from \cref{sum-squares-move} will never decrease $\sum A_m$.
\end{proof}

Now equipped with \cref{optimal-bound-sum-squares}, we actually no longer need that $z\in A_{k-1}$. In fact, we no longer even need to consider $A_{k-1}$ as a multiset anymore. For the rest of the proof, we will simply assume that $z$ is a positive real number. The next step is to find an upper bound on the change from $\sum A_{k-1}$ to $\sum A_k$, which is
\begin{align}
    \sum A_k-\sum A_{k-1}&=\zeta-z \nonumber \\
    \sum A_k-\sum A_{k-1}&=\lfloor z(1+\eps)\rfloor-z \nonumber \\
    \sum A_k-\sum A_{k-1}&\leq z(1+\eps)-z+1 \nonumber \\
    \sum A_k-\sum A_{k-1}&\leq\eps z+1. \label{sum-move}
\end{align}

\begin{proposition}\label[proposition]{optimal-bound-sum}
    Taking the upper bound in \cref{sum-move} for the increase from $\sum A_{k-1}$ to $\sum A_k$ never decreases $\sum A_m$.
\end{proposition}

\begin{proof}
Suppose that we make a move using updates for the sum and sum of squares described by \cref{new-sum-relation} and \cref{sum-squares-move} respectively with some value for $z$. Then, we will show that there always exists another value $z'$ with which we can make a move using \cref{sum-move} and \cref{sum-squares-move} that results in the same value of $\sum A_k$ but a larger value of $\sum A_k^2$. By the same logic that was used in the proof of \cref{optimal-bound-sum-squares}, this will never decrease $\sum A_m$.

Since the values of $\sum A_k$ must match, our selection of $z'$ must satisfy
\[\eps z'=\zeta-z=\lfloor z(1+\eps)\rfloor-z.\]
This is clearly positive, so $z'$ must be positive as well. We can also bound the size of $z'$ with
\begin{align*}
    \eps z'&=\lfloor z(1+\eps)\rfloor-z\leq z(1+\eps)-z=\eps z.
\end{align*}
Thus, $z'\leq z$. By \cref{sum-squares-move}, smaller values of $z$ yield larger values of $\sum A_k^2$. This is exactly what we wanted.
\end{proof}

With \cref{optimal-bound-sum-squares} and \cref{optimal-bound-sum}, we now have a strictly stronger set of actions on $\sum A_{k-1}$ and $\sum A_{k-1}^2$ that we can instead use with each of our moves.

\begin{definition}
    A \emph{strong move} with $z$ on $\bp{\sum A_{k-1},\sum A_{k-1}^2}$ makes the following updates:
    \begin{align*}
        \sum A_k&=\sum A_{k-1}+\eps z+1 \\
        \sum A_k^2&=\sum A_{k-1}^2+2\sum A_{k-1}-z^2+2z.
    \end{align*}
\end{definition}

For all integers $1\leq k\leq m$, let $z_k$ denote the value of $z$ used in the strong reversed move from $\bp{\sum A_{k-1},\sum A_{k-1}^2}$ to $\bp{\sum A_k,\sum A_k^2}$. Then,
\begin{align}
    \sum A_k&=\sum A_0+k+\eps\sum_{i=1}^kz_i. \label{sum-alphas}
\end{align}
The same process of expanding is more complicated for $\sum A_m^2$:
\begin{align*}
    \sum A_m^2&=\sum A_0^2+2\sum_{i=1}^m\bp{\sum A_i}-\sum_{i=1}^mz_i^2+2\sum_{i=1}^mz_i.
\end{align*}
We can then rewrite the $\sum A_j$ by plugging in \cref{Em^2-greater-than-Em} and \cref{sum-alphas}, and we thus have
\begin{align}
    \sum A_m^2&=\sum A_0^2+2\sum_{i=1}^m\bp{\sum A_0+i+\eps\sum_{j=1}^iz_j}-\sum_{i=1}^mz_i^2+2\sum_{i=1}^mz_i \nonumber \\
    0&\leq\sum A_0^2+2\sum_{i=1}^m\bp{\eps\sum_{j=1}^mz_j}-\sum_{i=1}^mz_i^2+2\sum_{i=1}^mz_i \nonumber \\
    0&\leq\sum A_0^2+2\eps m\sum_{i=1}^mz_i-\sum_{i=1}^mz_i^2+2\sum_{i=1}^mz_i. \label{big-sum-temp}
\end{align}
By the Cauchy-Schwarz Inequality,
\begin{align*}
    \bp{\sum_{j=1}^m1}\bp{\sum_{j=1}^mz_j^2}&\geq\bp{\sum_{j=1}^mz_j}^2 \\
    \sum_{j=1}^mz_j^2&\geq m^{-1}\bp{\sum_{j=1}^mz_j}^2.
\end{align*}
Plugging this result into \cref{big-sum-temp} yields the following quadratic polynomial in terms of $\sum_{i=1}^mz_i$:
\begin{align*}
    0&\leq\sum A_0^2+(2\eps m+2)\sum_{i=1}^mz_i-m^{-1}\bp{\sum_{j=1}^mz_j}^2 \\
    0&\geq m^{-1}\bp{\sum_{j=1}^mz_j}^2-(2\eps m+2)\sum_{i=1}^mz_i-\sum A_0^2.
\end{align*}
Since the leading coefficient is positive and the quadratic polynomial is on the lesser side of the inequality, $\sum_{j=1}^mz_j$ must be less than the larger root of this polynomial. This larger root is obtained via the quadratic formula:
\begin{align}
    \sum_{j=1}^mz_j\leq\frac{(2\eps m+2)+\sqrt{(2\eps m+2)^2+4m^{-1}\sum A_0^2}}{2m^{-1}}. \label{sum-alpha-bound}
\end{align}

\begin{lemma}\label[lemma]{sum-chippy}
     Suppose that $\eps=\Omega(n^{-1/4})$. Then, $\sum_{i=1}^mz_j$ must be $O\bp{\eps^2n^2|S_0|}$.
\end{lemma}

\begin{proof}
    Since $\eps=\Omega(n^{-1/4})=\Omega(m^{-1/4})$ by \cref{m-equals-n}, the $2\eps m$ in \cref{sum-alpha-bound} must be $\omega(1)$. Thus, if we substitute asymptotic behavior into \cref{sum-alpha-bound}, we get
    \[\sum_{j=1}^mz_j\leq\frac{\Theta(\eps m)+\sqrt{\Theta\bp{\eps^2m^2}+\Theta\bp{m^{-1}\sum A_0^2}}}{\Theta\bp{m^{-1}}}.\]
    By \cref{m-equals-n}, we can replace every $m$ with $n$ to get
    \[\sum_{j=1}^mz_j\leq\frac{\Theta(\eps n)+\sqrt{\Theta\bp{\eps^2n^2}+\Theta\bp{n^{-1}\sum A_0^2}}}{\Theta\bp{n^{-1}}}.\]
    By \cref{E02-less-than-(E0)^2} and \cref{sum-E0}, $\sum A_0^2\leq(\sum A_0)^2=O(n^2|S_0|^2)$, so we can rewrite the previous inequality as follows:
    \begin{align}
        \sum_{j=1}^mz_j\leq\frac{\Theta(\eps n)+\sqrt{\Theta\bp{\eps^2n^2}+O\bp{n|S_0|^2}}}{\Theta\bp{n^{-1}}}. \label{asy-sum-bound}
    \end{align}
    Depending on which one of the two terms inside of the square root dominate, there are two cases that we have to tackle.
    \begin{CaseThm}
        $\Theta\bp{\eps^2n^2}$ dominates $\Theta\bp{n|S_0|^2}$.
    \end{CaseThm}
    In this case, we can rewrite \cref{asy-sum-bound} as follows:
    \begin{align*}
        \sum_{j=1}^mz_j&\leq\frac{\Theta(\eps n)+\sqrt{\Theta\bp{\eps^2 n^2}}}{\Theta\bp{n^{-1}}} \\
        \sum_{j=1}^mz_j&\leq\frac{\Theta(\eps n)}{\Theta\bp{n^{-1}}} \\
        \sum_{j=1}^mz_j&=O\bp{\eps n^2}.
    \end{align*}
    Recall that $|S_0|=\Omega\bp{\eps^{-1}}$ because we must begin with that many bad edges. This means that $O\bp{\eps^2n^2|S_0|}$ must include $O\bp{\eps n^2}$, and we are done with this case.
    \begin{CaseThm}
        $\Theta\bp{n|S_0|^2}$ dominates $\Theta\bp{\eps^2n^2}$.
    \end{CaseThm}
    In this case, we can rewrite \cref{asy-sum-bound} as follows:
    \begin{align*}
        \sum_{j=1}^mz_j&\leq\frac{\sqrt{\Theta(\eps^2 n^2)}+\sqrt{\Theta\bp{\eps^2 n^2}+O\bp{n|S_0|^2}}}{\Theta\bp{n^{-1}}} \\
        \sum_{j=1}^mz_j&\leq\frac{\sqrt{\Theta(\eps^2 n^2)}+\sqrt{O\bp{n|S_0|^2}}}{\Theta\bp{n^{-1}}} \\
        \sum_{j=1}^mz_j&\leq\frac{\sqrt{O\bp{n|S_0|^2}}}{\Theta\bp{n^{-1}}} \\
        \sum_{j=1}^mz_j&=O\bp{n^{3/2}|S_0|}.
    \end{align*}
    Since $\eps=\Omega(n^{-1/4})$, 
    \begin{align*}
        \sum_{j=1}^mz_j&=O\bp{\eps^2\cdot\eps^{-2}n^{3/2}|S_0|} \\
        \sum_{j=1}^mz_j&=O\bp{\eps^2\bp{n^{-1/4}}^{-2}n^{3/2}|S_0|} \\
        \sum_{j=1}^mz_j&=O\bp{\eps^2n^2|S_0|}.
    \end{align*}
\end{proof}

The result of \cref{sum-chippy} can be plugged into \cref{sum-alphas} with \cref{m-equals-n} and \cref{sum-E0} to obtain a bound on $\sum A_m$. Since $\eps=\Omega(n^{-1/4})$, we have
\begin{align}
    \sum A_m&=\sum A_0+m+\eps\sum_{i=1}^mz_i \nonumber \\
    \sum A_m&=O\bp{n|S_0|}+\Theta(n)+O\bp{\eps^3n^2|S_0|} \nonumber \\
    \sum A_m&=O\bp{\eps^3n^2|S_0|}. \label{sum-Em-bound}
\end{align}
By \cref{sum-Em-bound} and \cref{sum-to-finish}, the final number of bad edges in the graph must be $O\bp{\eps^2n^2|S_0|}$. By construction, every vertex in $S_0$ begins with at least one bad edge, so we must begin with $\Omega\bp{|S_0|}$ bad edges. We obtain the following upper bound on the price of uncertainty for consensus games when $\eps=\Omega(n^{-1/4})$:
\[\PoU(\eps,\text{consensus})=\frac{O\bp{\eps^2n^2|S_0|}}{\Omega\bp{|S_0|}}=O\bp{\eps^2n^2}.\]
This finishes the proof of \cref{thm:upper-bound}.

\section{Conclusion}
In this paper, we established tight asymptotic bounds on the worst-case increase in social cost experienced by a network under uncertainty. Our work leaves open some interesting future research directions. Firstly, we believe that the lower bound also holds in the more general case of arbitrary agent update ordering. This may more closely reflect the dynamics of real social networks, and we have a modified version of our construction in this paper that could potentially work for this new problem. Additionally, the study of adversarial influences to normal operations due to uncertainty or individual biases could also be extended to other game-theoretic network settings. While agents in our consensus game model have fixed relationships defined by graph edges, our model does not totally capture the fluidity of social networks due to the formation and destruction of individual connections. Incorporating uncertainty into more general models for social networks can provide greater insight into their behavior in the real world.

\section*{Acknowledgments}
This paper was supported by a NSF Graduate Research Fellowship.

\newpage

\bibliographystyle{plainnat}
\bibliography{references}

\end{document}